\DeclareMathAlphabet{\mathpzc}{OT1}{pzc}{f}{it}
\theoremstyle{plain}
\newtheorem*{myassum*}{The Hashrate Control Algorithm}
\newtheorem*{proposition*}{Proposition 1}
\newtheorem*{proposition**}{Proposition 2}
\newtheorem*{definition*}{Definition 1}
\newtheorem*{definition**}{Definition 2}
\definecolor{rune}{HTML}{4A6672}
\title{Targeted Nakamoto:\\
A Bitcoin Protocol to Balance Network Security and Carbon Emissions} 
\providecommand{\keywords}[1]
{
  \small	
  \textbf{\textit{Keywords---}} #1
}
\author{Daniel Aronoff\thanks{Massachusetts Institute of Technology, Cambridge MA, USA}}
\date{August 21, 2025}
\begin{document}
\maketitle

\begingroup\renewcommand\thefootnote{}\footnotetext{\par \vskip 1em \noindent I wish to thank Madars Virza for his aid and advice in developing the core ideas of this paper; Neha Narula for helping to refine and clarify the presentation; thoughtful comments from participants at the Chaincode Labs research seminar and Mark Serrahn for producing the figures herein. Kristain Praizner wrote the code for the hashrate control algorithm and implemented it on an API that is a companion to this paper.}
\addtocounter{footnote}{0}\endgroup

\pagenumbering{roman}

\begin{abstract}

In a Proof-of-Work blockchain like Bitcoin, mining hashrate increases with block rewards. Higher hashrate reduces vulnerability to attacks (lowering network security cost) but raises carbon emissions and electricity use (raising environmental cost), reflecting a tradeoff with a hashrate point or interval that minimizes total cost. Targeted Nakamoto is a protocol that incentivizes miners to home hashrate in on the minimum cost range by capping block rewards when above target and imposing a block reward floor when below. Monetary neutrality is maintained by proportional adjustments in spending potential among UTXO holders, offsetting the additions and subtractions to the block reward caused by the protocol. Bitcoin faces two conflicting existential risks: currently, high mining energy consumption invites political backlash; in the future, reductions in miner rewards will cause hashrate to decline, lowering the cost of an attack. Targeted Nakamoto balances these concerns by guiding hashrate towards a chosen target that places an effective floor on the cost of attack and a ceiling on carbon emissions.

\end{abstract}
\keywords{Bitcoin, Blockchain, Mechanism Design, Network Security} 
\thispagestyle{empty}

\tableofcontents
\pagebreak

\newpage
\pagenumbering{arabic}
\onehalfspacing

\section{Introduction}
\label{sec: Introduction}

Proof-of -Work (PoW) blockchain cryptocurrencies like Bitcoin require the application of computing power by miners to operate the network. Miners assemble blocks and compete to solve a puzzle set by the code. The number of puzzle guesses (one hash per guess) a mining computer makes in a specified interval of time is its hashrate, which consumes electricity. Miners pay the direct cost of electricity. There are, in addition, two externalities created by mining activity. One externality is the vulnerability of the blockchain network to attack. The degree of vulnerability is related to the cost of forking the blockchain, which in turn is related to the cost of hashrate and the volume of hashrate employed by miners. The higher is hashrate, the higher is the cost of forking a PoW blockchain, which implies that network security is increasing in hashrate (equivalently, security cost is decreasing in hashrate). The other externality arises from the electricity consumed by hashrate. Miner demand for electricity increases both the price and aggregate consumption of electricity. The former is a cost imposed on competing users of electricity. The latter generates carbon emissions, which has a social cost. These electricity related costs are increasing in hashrate. The interaction of security risk and carbon emissions imply there is a tradeoff in total cost, defined as the sum of the two externalities costs, at different levels of hashrate. An increase in hashrate reduces the vulnerability of the network (equivalently, security cost is decreased), while it increases carbon emissions and creates upward pressure on the price of electricity. There may be a point or interval of hashrate in which total cost is minimized. Current PoW protocols, which I refer to as "Nakamoto", do not restrict hashrate.\footnote{  Our focus in this paper is Bitcoin, but the protocol applies to any PoW cryptocurrency.}

The key lever upon which the new protocol is built is the incentive effect of  the block reward. An increase in the USD value of the block reward creates an incentive for miners to apply more hashrate (assuming the dollar is the currency in which miners calculate profit). A decrease in the USD value of the block reward has the opposite effect. Equilibrium hashrate can range from zero to unbounded. Figure \ref{fig:Miner-rev-vs-energy-cons} shows the tight correlation over time between percentage change in the USD value of the Bitcoin block reward and the percentage change in energy consumption from Bitcoin mining.

\begin{figure}[h]
\centering
{\includegraphics[page=1,width=0.65\textwidth,height = 0.35\textheight]{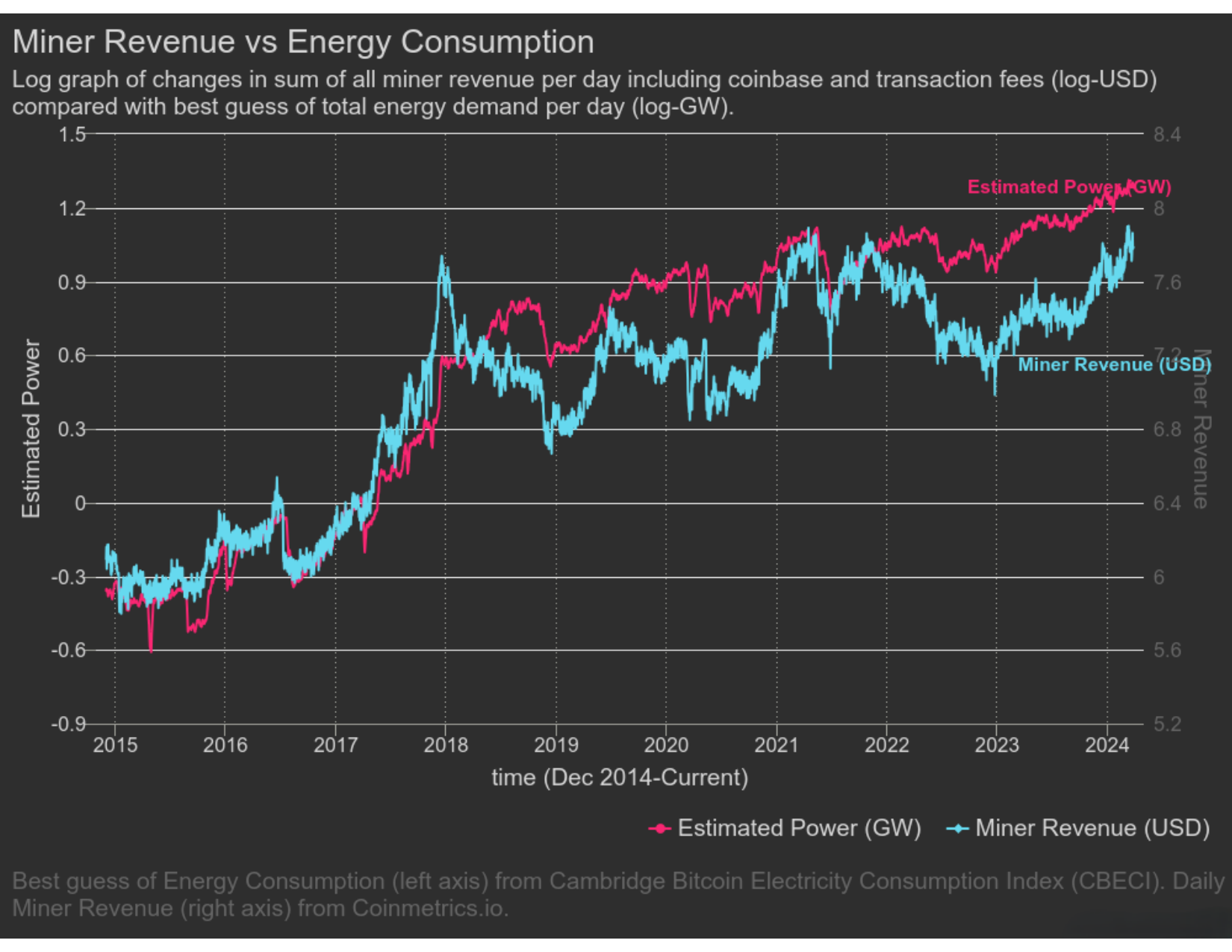}}
\caption{Bitcoin Energy Consumption vs USD Mining Rewards. Source: CBECI\protect\footnotemark and Coinmetrics.\protect\footnotemark}
\label{fig:Miner-rev-vs-energy-cons}
\end{figure}
\footnotetext {\url{https://ccaf.io/cbnsi/cbeci}}\footnotetext{\url{https://coinmetrics.io}}

 Targeted Nakamoto (hereafter "\textit{Targeted}") modifies Nakamoto to create an incentive for miners to push hashrate toward an interval whenever it strays outside the interval. The miner's minimum block reward (in BTC) is increased when hashrate is below the interval, which induces miners to apply more hashrate. The miner's maximum block reward (in BTC) is decreased when hashrate is above the interval, which induces miners to apply less hashrate. Monetary neutrality is maintained by offsetting the increase and decrease in the miner's block reward by opposite adjustments to the spending potential of addresses with UTXO value, in proportion to the value held by each address. 
 
 \subsection{Balancing Competing Risks}

 \textit{Targeted} balances the ecological threat from electricity consumed by hashrate against the threat to network security from low hashrate. The former is a subject of current political controversy, while the latter will materialize in the future when the USD value of the block reward declines as a result of the halving of the coinbase. \footnote{Historically, the trend USD/BTC exchange rate has risen sufficiently to offset the halving of the coinbase. However, this cannot continue indefinitely for two reasons. First, coinbase will eventually be zero. Second, maintaining a fixed USD value of the coinbase paid to miners as halving continues would require the inflation adjusted USD to BTC exchange rate to double every four years. This implies a growth rate that exceeds US (and world) GDP by orders of magnitude, which is dynamically unsustainable (Tirole \cite{Tirole}).} Transactors will have little incentive to compensate by increasing fees, which are paid to obtain a priority in the qeue of mempool transactions. The decline in block reward will elicit a decline in hashrate (Equation \ref{eq:Mining profit}). \textit{Targeted} compensates by increasing the block reward, and does it without increasing overall UTXO spending potential.

\subsection{Links to code and web API (by Kristian Praizner)}

Attached are links to the code-base for controlling Hashrate and a Web API with an empirically estimated hashrate function and a module that enables the user to choose a target hashrate interval and control parameters and to run counterfactual experiments with alternative time-paths of model variables. 

The code-base is here \url{https://github.com/Krisp140/TargetedNakomoto}

The web API is here \url{https://targetednakamoto.com/About}.

\subsection{Related Literature} 

This paper builds on the contributions of Nakamoto \cite{BitcoinWhitePaper}  who laid the foundations for the PoW protocol used in Bitcoin. 

There are two previous works with which aspects of this study closely align. Pagnotta \cite{Pagnotta2021} identified and modeled the complimentarity between mining hashrate and Bitcoin USD price. The interdependence runs through network security. Hashrate determines security, security affects traders valuation of Bitcoin, which affects it price, price affects the USD value of the mining reward, which determines equilibrium hashrate. \textit{Targeted} injects into this causal chain a regulator that adjusts the miner's block reward to incentivize miners to keep hashrate (and thereby security) in the neighborhood of a selected target. Mirkin et.al. \cite{mirkinsprints} show that electricity cost can be reduced, for any level of network security, by adding a ''proof-of-delay'' (e.g. using a verifiable delay function) threshold for a miner's block to be eligible to append to the chain. The mandatory delay interval creates an incentive for miners to shift expenditure from hashrate to investment in mining equipment. The re-allocation of expenditure from variable to fixed cost reduces the carbon footprint. Notably, proof-of-delay does not alter equilibrium expenditure on mining, which is determined by the miner's block reward. This means it does not affect the cost to attack the network and thereby network security. While proof-of-delay reduces energy consumption for a given level of security, \textit{Targeted}, which is compatible with proof-of-delay, enables network security to be selected and maintained.

One of the externalities costs addressed in this paper are the carbon emissions generated by the electricity consumed in mining Bitcoin. I rely on two data providers. One is the Bitcoin Electricity Consumption Index  (BECI) of the Cambridge U.K. Centre for Alternative Finance \cite{CambridgeBitcoinElectricityConsumptionIndex}, which is regarded as a credible - though not infallible - source for information on Bitcoin energy consumption. The other is Coinmeterics \footnote{\url{https://coinmetrics.io}}(which is a source for BECI) for other Bitcoin performance indicators. Coinmetrics is considered to be a careful aggregator of data on cryptocurrencies. Sedlmeir et.al. \cite{Sedlmeir} and Gallersdorfer et.al. \cite{Gallersdorfer} review comparative magnitudes of energy consumption for different blockchain protocols. Gallersdorfer \cite{Gallersdorfer} compares the estimation of Bitcoin electricity consumption across several sources. For my purposes, the salient finding is that these studies document a time-path of energy consumption that is tightly correlated with the time-path of hashrate in Bitcoin. \footnote{See Figure 1 in Gellersdorfer \cite{Gallersdorfer}.} 

As with several previous studies of blockchain protocol, this paper uses mechanism design approach to the study of the consensus protocol.  Chen et.al. \cite{CHEN} surveys various aspects of existing PoW blockchain protocols, including the incentive to increase energy consumption in mining,  from a mechanism design perspective. Cong et.al \cite{Cong} analyses the effect of PoW mining competition on energy consumption.  Leshno and Strack \cite{Leshno} identifies features of the miner payoff structure that are required to maintain blockchain stability.  They prove that certain of the core properties of a decentralized PoW blockchain - namely free entry of anonymous profit driven miners - are only incentive compatible with a reward mechanism that is similar to Nakamoto. \textit{Targeted} preserves those features. 

In their study of the Bitcoin payment system, Huberman et.al. \cite{HubermanLeshno} characterize an equilibrium where transactions are  appended to blocks in descending order of the size of fee offered by transaction senders, which reflects transactor's relative willingness to pay for a position in the queue for placement on the blockchain. The adjustment to the miner's block reward in \textit{Targeted} is guided by this insight. When the block reward is adjusted, the  ordering of the rewards received by the miner from transactions always matches the order of fees offered by transaction senders. This, in turn, leaves unaltered he order in which transactions are appended to blocks.

\subsection{The research question and design constraints} 

\textit{Targeted} augments Nakamoto by creating an incentive for miners to home in on a hashrate interval. The augmentation comes close to adhering to formal constraints which  reflect core values held by the Bitcoin community.\footnote{What constitutes a core feature is subjective. I developed the list herein after consultation with a number of participants in the Bitcoin network, including developers, miners and exchange operators.} The constraints are partitioned into the consensus protocol layer (\text{C} constraints) and the transaction protocol layer (\textbf{T}- constraints).

$\text{\textbf{C}}\; \text{constraints}\left\{
\begin{tabular}{p{.8\textwidth}}
\begin{itemize}

\item No change to proof-of-work mining 

\item No change to growth rate target and puzzle difficulty adjustment policy

\item No change in block size

\item No change to the monetary policy (aggregate UTXO value unaffected)

\item No required oracle

\item No hard forks

\end{itemize}
\end{tabular}
\right.$

$\text{\textbf{T}}\;\text{constraints}\left\{
\begin{tabular}{p{.8\textwidth}}
\begin{itemize}

\item No interference with transactions between nodes

\item No change to the order in which transactions are appended to the blockchain
\end{itemize}
\end{tabular}
\right.$

It will be shown that \textit{Targeted} satisfies each of these constraints with one exception. \textit{Targeted} can instruct miners to burn UTXO's from transaction outputs, which may require a hard-fork to implement.

\subsection{Roadmap} 

The rest of the paper is organized as follows. Section \ref{sec:Hashrate Externalities} provides an overview of the two types of cost, the sum of which \textit{Targeted} is designed to minimize; environmental costs that increase with hashrate and network sercurity costs that decrease with hashrate. Section \ref{sec: Targeted Nakamoto - A Mechanism Design Perspective} casts Nakamoto and \textit{Targeted} in a mechanism design framework in order to articulate the key features of each protocol and their interaction and to show how \textit{Targeted} augments Nakamoto. Section \ref{sec: The Model} shows how the ratio of mining puzzle difficulty to blockchain growth rate is a sufficient statistic for hashrate and then derives the expression for equilibrium hashrate as a function of, inter alia, the value of the block reward. Section \ref{sec: The Targeted Block Reward Policy} uses the relationship between the puzzle difficulty/growth rate ratio and hashrate, along with hashrate equilibrium to construct the policy for adjusting the miner's block reward to incentivize miners to push hashrate toward the target hashrate interval. Section \ref{sec:Strategic Effects of Block Adjustment Policy} demonstrates that the block reward policy is consistent with Nash equilibrium at the targeted hashrate and demonstrates that the hashrate control algorithm dynamically pushes hashrate toward the target interval. Section \ref{sec: Monetary Neutrality} states a protocol that renders the aggregate money supply and the distribution of spending potential unaffected by the Hashrate Control Algorithm. The paper concludes with a summary of the key properties of \textit{Targeted}.

\section{Hashrate Externalities}
\label{sec:Hashrate Externalities}

 There are intensive and extensive margins related to hashrate in Bitcoin mining. The intensive margin is the equilibrium hashrate cost paid by miners. The extensive margins are the social cost of $CO^{2}$ emissions from the electricity used to generate hashrate and the network security risks arising from the cost of an attack that forks the blockchain.  The former is increasing and the latter is decreasing in aggregate miner hashrate. The causal link from hashrate to these variables is determined by ASIC efficiency and electricity cost. Equation \ref{eq:unit_N_cost} displays the relationship.

\vspace{-5pt}
\begin{equation}
\label{eq:unit_N_cost}
\underset{\text{per-hash cost}}{\underbrace{cost/N}} = \underset{\text{electricity cost}}{\underbrace{(cost/KwH)}}\div\underset{\text{ASIC efficiency}}{\underbrace{(N/KwH)}}\end{equation}

Where $N$ is hashrate (defined in Section \ref{sec: The Model}), $KwH$ is kilowatt hours. Environmental cost is an increasing function of hashrate as follows.

\vspace{-5pt}
\[\text{environmental cost} \underset{\text{external function}}{\underbrace{\longleftarrow}} CO^{2} \underset{\text{physical law}}{\underbrace{\longleftarrow}} KwH = \textbf{\textit{N}}\div \text{ASIC efficiency}\]

Network security cost is a decreasing  function of hashrate as follows.

\vspace{-5pt}
\[\text{network security cost}\underset{\text{external function}}{\underbrace{\longleftarrow}}\text{mining cost} = \textbf{\textit{N}}\times \text{electricity cost}\div \text{ASIC efficiency}\]

Figure \ref{fig: PoW Cost Graph 1} displays the relationship between environmental cost and network security cost on the vertical axes, and hashrate on the horizontal axis.  is a upward sloped curve and network security is a downward sloped curve. We define optimal hashrate interval as the point, or interval, where the sum of the costs are minimized. The goal of \textit{Targeted} is to incentivize miners to home in on this interval. 

\vspace{-5pt}
\begin{figure}[h]
\centering
\includegraphics[page=1,width=0.55\textwidth,height = 0.27
\textheight]{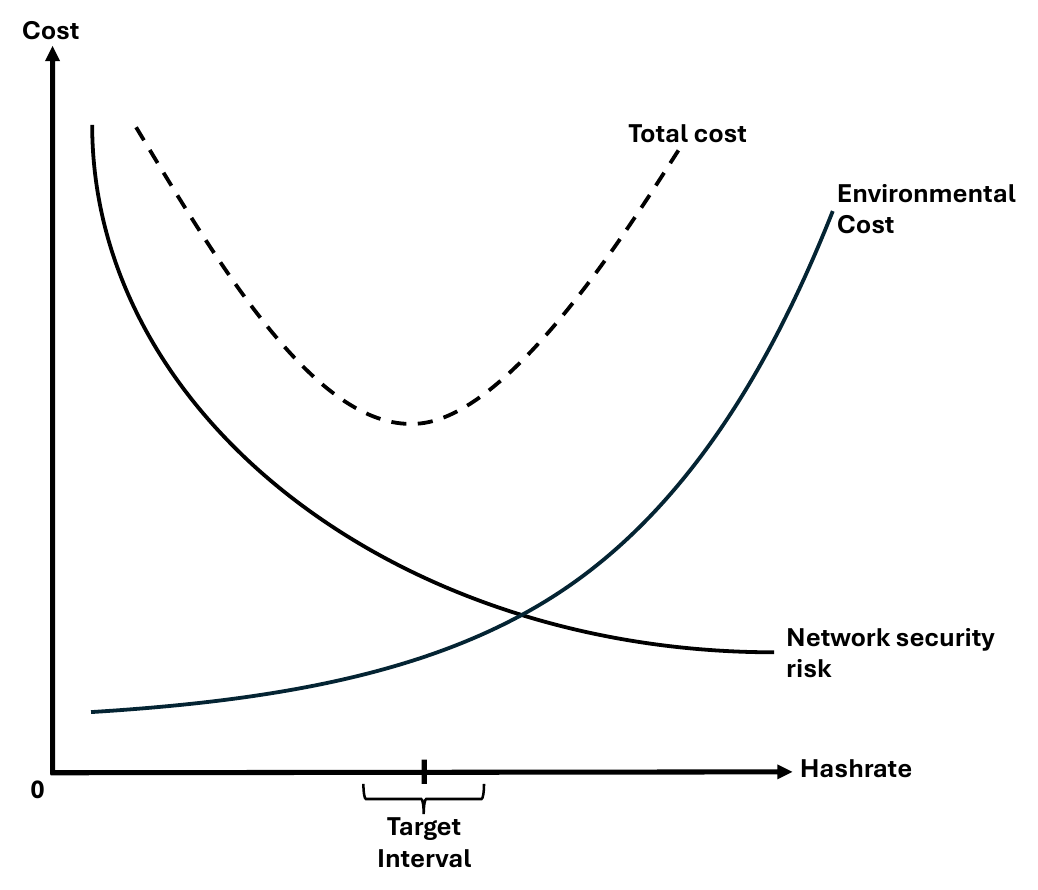}
\caption{Target Hashrate Interval}
\label{fig: PoW Cost Graph 1}
\end{figure}

\subsection{Dependencies}

Assuming the external functions are stable over time, environmental risk and network security are conditional on ASIC efficiency and electricity cost.\footnote{The function connecting electricity to $CO^{2}$ is a physical law. The function connecting forking cost to attack incentives is subject to variables that lie outside our analysis, such as, inter alia, interest rates and the legal environment.}  An increase in ASIC efficiency will shift both curves rightward and increase the optimal hashrate target. An increase in electricity cost will shift the security curve rightward while leaving the environmental risk curve unaffected, and reduce the optimal hashrate. ASIC efficiency and electricity cost can be estimated and oracles can be implemented to adjust the hashrate target. Additionally, there is a causal channel running from hashrate to electricity cost, whereby the increased demand for electricity places upward pressure on the price. Fowlie et.al. \cite{Fowlie} estimate a price elasticity of demand for electricity in the U.S. of -2.06 with a range of alternative elasticities around -1, which implies that an increase in electricity demand will be split between increases in both price and quantity. 

Less clear is how to measure the external functions. There is a literature on measuring the social cost of $CO^{2}$ emissions. Nordhaus \cite{Nordhaus}), Deitz et.al. and Stern \cite{Stern} are important contributions and Chen et.al. \cite{CHEN} model the flow of carbon emissions from multiple energy sources. Nevertheless, the shape and magnitude of the relationship between $CO^{2}$ emissions and social cost is a matter of ongoing debate. There has not been any substantive work, to our knowledge, connecting mining cost to quantifiable network security.\footnote{Appendix \ref{app:network_security} discusses the relationship between mining cost and network attack vectors.} A market in Bitcoin security risk, with payoffs tied to attack events, could resolve the matter. But for now, this function can only be guessed at. The practical implementation of \textit{Targeted} will require the formation of a consensus among Bitcoin network participants on the external functions. This may require negotiations that will involve compromise over differing opinions and economic interests.

\subsection{Network Security}
\label{app:network_security}

To elucidate the relationship between hashrate cost and network security, we review three vectors of attack on a PoW blockchain that decrease in profitability as hashrate cost increases. 

\textit{Double-spend attack} A double-spend attack requires the attacker to build an alternative chain that overtakes the incumbent canonical chain at in terms of cumulative mining puzzle difficulty. An attacker with over 50\% of total hashrate will overtake the incumbent canonical chain at some block with probability measure 1. At less than 50\% share, the attacker's probability of success is less than half.\footnote{See Gervais et.al. \cite{Gervais} for simulations of attacks using different ratios of attacker hashrate to miner hashrate. The frequency of success is increasing in attacker hashrate and the average number of blocks required to overtake the incumbent canonical chain is decreasing in attacker hashrate.} The amount of attacker hashrate required to successfully fork the blockchain, and its cost, is increasing in the level of honest miner hashrate. Budish  \cite{Budish} and Aronoff et.al. \cite{ADESS} point out that the cost of a successful attack is at least partially offset by the block rewards that accrue to the attacker. The possibility that a victim of a double-spend attack retaliates by mining on the incumbent canonical chain leads to a war of attrition between the attacker and the victim (Moroz et.al. \cite{moroz2020doublespend}). Aronoff et.al. \cite{ADESS} show that the outcome of a war of attrition depends on equilibrium refinements and exigent circumstances, even when the costs between players differ. Nevertheless, an increase in the cost of launching the attack will act as a deterrent by increasing the sunk cost that the attacker loses if it is unsuccessful. Equation \ref{eq: Expected Profit from Double-Spend Attack} is the attacker's un-discounted\looseness=-1 ex-ante expected profit when block rewards equal hashrate cost. The key observation is that, for any probability of success, the attacker's expected profit is declining in honest miner hashrate.
\vspace{-5pt}
\begin{multline}
\label{eq: Expected Profit from Double-Spend Attack}
\text{Ex-ante Expected Profit}\;=\; Prob(success)V - \\
\{1 - Prob(success)\}\text{E[hashrate cost per block]E[number of blocks]}
\end{multline}

Where $V$ is the double-spend value. 

\textit{Eclipse attack} An eclipse attack involves isolating a node from the network and constructing a chain, forked from the incumbent canonical chain, that is broadcast only to the victim (the "false chain"). On the false chain the attacker sends the victim UTXO value, which the victim believes to be on the canonical chain. After the attacker receives the exchange item from the victim, it discontinues applying hashrate to the false chain. The attacker incurs two elements of cost. One element is the hashrate the attacker applies to the false chain. The other element is the cost of isolating the victim from the network. Neither of these costs necessarily increase with honest miner hashrate. However, the attacker may need to apply hashrate that is proximate to honest miner hashrate to convince the victim into believing that the fake chain is canonical. Therefore it is possible that the cost of an eclipse attack is increasing in honest miner hashrate, in which event the profit is declining in honest miner hashrate.

\textit{Shorting attack} A shorting attack has two parts; (i) placing a bet that a cryptocurrency will decline in value relative to some other currency or asset and (ii) gaining control of the cryptocurrency network in order to cause a disruption that will induce a loss in confidence and fall in exchange value. One type of bet is a forward sale of, say, Bitcoin in dollars. The attacker contracts to sell BTC to a counterparty at a future date for a dollar strike price per unit of BTC. The attacker anticipates that its disruption to Bitcoin will enable it to purchase BTC at a dollar price that is below the strike price, enabling it to resell to the counterparty at a profit. The size of the bet is independent of hashrate. In a liquid futures market an attacker could contract to purchase \$x dollars worth of BTC at some future price regardless of the level of hashrate. On the other hand, the cost of gaining control of the network scales with hashrate. Thus, the profitability of the attack is declining in hashrate.\footnote{The value of the block rewards earned by the attacker decline with the Bitcoin exchange value.}

Finally, Carlesten et.al. \cite{BitcoinwotheBlockReward} argued that as the coinbase approaches zero and miners become dependent on transaction fees, the decline and variability of the fees will incentivize selfish mining, which could disrupt the liveness of the blockchain and possibly push down hashrate (and thereby mining cost).

\section{Targeted Nakamoto - A Mechanism Design Perspective}
\label{sec: Targeted Nakamoto - A Mechanism Design Perspective} 

This section describes the elements of \textit{Targeted} and compares the mechanism designs of \textit{Targeted} and Nakamoto. 

\subsection{Key Building Blocks of Targeted Nakamoto}

\textit{Targeted} controls hashrate by leveraging four features that are intrinsic to the blockchain. One feature is informational. The hashrate, which is not recorded on the blockchain, can be estimated from mining puzzle difficulty and block timestamps, which are recorded on the blockchain. The augmented protocol targets bounds on hashrate, for which the ratio of puzzle difficulty to blockchain growth rate is a sufficient statistic.

The second feature is an object that can be modified. The Nakamoto block reward is comprised of two parts; an algorithmic minting of new tokens and a discretionary fee added to a transaction. Both parts of the block reward  are recorded in the blockchain and can be altered by the code. \textit{Targeted} places a floor or ceiling on the the miner's block reward when hashrate is above or below the target interval.

The third feature is a market incentive. The hashrate applied by miners is affected by the block reward; it increases as the value of the block reward increases and decreases as the value of the block reward decreases. This implies that hashrate can be controlled by manipulation of the block reward.

The fourth feature enforces monetary neutrality by adjusting the aggregate UTXO spending potential to exactly offset the amount added to or subtracted from the total block reward to obtain the miner's block reward.

\subsection{An overview of the mechanism design of the protocols}

\textit{Targeted} can be decomposed into two protocols that interact while operating separately. The Nakamoto part is the mechanism associated with the current protocol. The \textit{Targeted} part adds a mechanism to achieve a target hashrate interval. Figure \ref{fig: Blockchain Records} displays the information recorded on the blockchain that is used by each protocol\footnote{The network also records the addresses of UTXO'a at each block. Neither Nakamoto or \textit{Targeted} use that information.}. 

\begin{figure}[tb]
\centering
\includegraphics[page=1,width=0.7\textwidth,height = 0.16
\textheight]{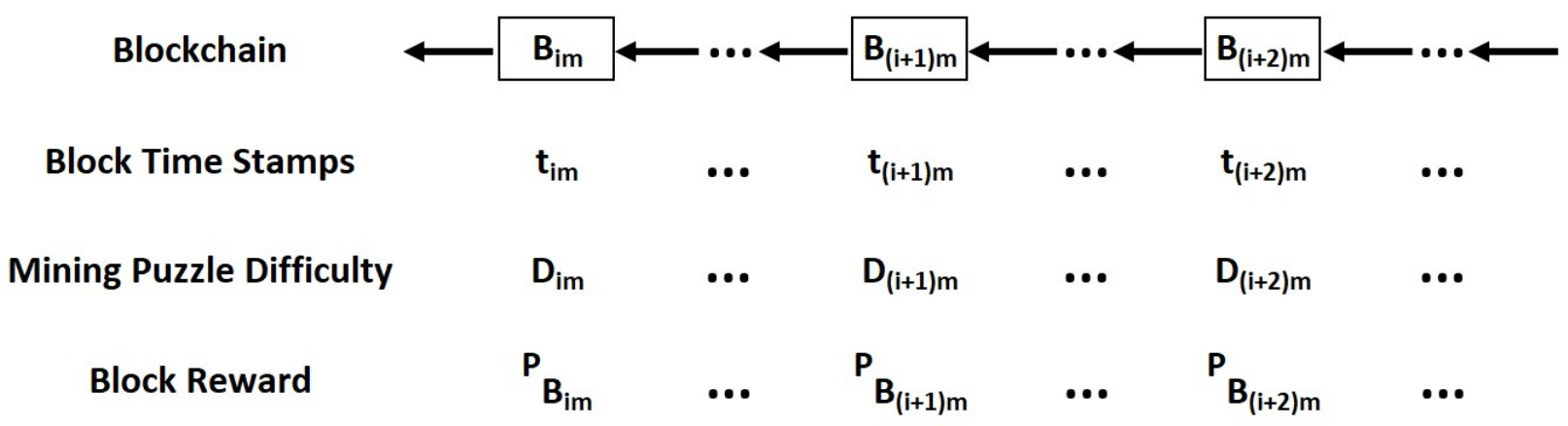}
\caption{Blockchain Records}\label{fig: Blockchain Records}
\end{figure}

Table \ref{table: A Comparison of Mechanisms} shows the basic elements of Nakamoto and \textit{Targeted}. For each protocol, the information contained in the signal, the instrument and the feedback are recorded on the blockchain. There is a salient difference in the target. The target for Nakamoto, blockchain growth rate, is recorded on the blockchain. The target for \textit{Targeted}, hashrate, is not recorded on the blockchain.

\begin{table}[H]
\centering
\begin{tabular}{ |p{1.9cm}||p{4.9cm}|p{4.9cm}|  }
 \hline
 \multicolumn{3}{|c|}{Protocol Mechanism Design} \\
 \hline
 Element & Nakamoto & \textit{Targeted}\\
 \hline
  Target & Blockchain Growth Rate & Hashrate Target Interval\\
  Signal & Block Timestamps & Puzzle Difficulty/Growth Rate\\
  Feedback & Block Timestamps & Puzzle Difficulty/Growth Rate\\
 Instrument & Puzzle Difficulty & Block Reward\\
 Policy & Adjust to Growth Rate & Ceiling or Floor on Block Reward\\
 \hline
\end{tabular}
\caption{A Comparison of Mechanisms}\label{table: A Comparison of Mechanisms}
\end{table}

\section{The Model}
\label{sec: The Model}

This section derives and characterizes the equilibrium level of hashrate. Section \ref{subsec:The puzzle difficulty signal of hashrate} derives the function of puzzle difficulty that is a sufficient statistic for hashrate. Section \ref{subsec: Equilibrium} characterizes the equilibrium hashrate as a function of, inter alia, the miner's block reward.

\subsection{The puzzle difficulty signal of hashrate}
\label{subsec:The puzzle difficulty signal of hashrate}

Nakamoto sets a target time for intervals between the addition of new blocks to the blockchain. The timestamps are read by the code at intervals of $m$ blocks, which is an epoch. An epoch in Bitcoin  $m = 1,260$ blocks. The instrument used to achieve the target is an adjustment to the difficulty of the mining puzzle, denoted as $D$. The mining puzzle is the solved when the miner's input into a hash function outputs a value that falls within the target range set by the code. Puzzle difficulty is the ratio of the range set by the code to the range of all possible output values. The probability of guessing the solution is $p = 1/D$ and each guess is independent.\footnote{By construction $p$ is a monotone decreasing function $P(D)$ of $D$. The derivative $\frac{d p}{d D} <0$ in both the simplified expression in the text as well as e.g. the algorithm used in Bitcoin Core. The derivative is the only property material to our analysis.} $D$ is adjusted after the $m$th (final) block of a epoch by dividing the number of blocks into the elapsed time from the previous timestamp. When the blockchain growth rate exceeds the target, puzzle difficulty is increased. When the blockchain growth rate falls below target, puzzle difficulty is decreased. 

A single mining computer can guess a puzzle solution sequentially in time.\footnote{A computer with $n$ parallel processors is treated here as $n$ separate computers.} I model mining computers as making successive guesses synchronously at times $t\in \{t_{1},t_{2},...\}$, each separated by an interval of temporal length $ t_{i+1} - t_{i}$ which I normalize to 1 unit of time\footnote{The synchronicity enables the expression of the time-path of puzzle solutions compactly as a Bernoulli Process, which simplifies the analysis WLOG.}. The Nakamoto blockchain growth target is one block every $T^{*}$ units of time. During this time interval each computer has guessed (or hashed) $T^{*}$ proposed solutions. I define a unit of hashrate as a single mining computer making $T^{*}$ puzzle guesses. If there are $N$ mining computers making guesses, the number of total guesses made in the time interval of the blockchain growth target is $NT^{*}$ and the hashrate is $N$.\footnote{ This accords with the conventional definition of hashrate used e.g. by Coinmetrics \cite{CoinmetricsCMBI}, as puzzle guesses per second.} The framework for guessing puzzle solutions over time is a Bernoulli Process. Let $T$ denote the time the first puzzle solution is broadcast to the network. For a hashrate of $N$ the expected time to reach the solution is expressed by Equation \ref{eq: Expected Time to Solve Mining Puzzle}.

\begin{equation}
\label{eq: Expected Time to Solve Mining Puzzle}
\underset{\text{expected time between blocks}}{\underbrace{\mathbb{E}[T]}} = D/N \footnote{The formula for the expected time of first solution is: $\mathbb{E}$[Number of guesses] = 1/Prob, where Pr is the probability of solving the puzzle. In our case, $Prob = 1/D$ and Number of guesses = $NT$. When $N$ is a fixed number, the Bernoulli Process implies that $N\mathbb{E}[T]= 1/Pr = D$. This yields the expression in Equation \ref{eq: Expected Time to Solve Mining Puzzle} }
\end{equation}

Equation \ref{eq: Expected Time to Solve Mining Puzzle} shows that, in order to achieve the growth rate target of $T^{*}$ in expectation, puzzle difficulty $D$ must be adjusted to offset hashrate $N$. Since $T$ for the prior epoch is recorded on the blockchain and $D$ is set by the code, the formula can be inverted to estimate hashrate as a function of the growth rate and puzzle difficulty, $\hat{N} = D/T$.\footnote{ $\hat{N} = N + \epsilon$, where $\epsilon$ is a mean zero iid random variable.} $\hat{N}$ is unbiased since the observed $T$ is an unbiased estimate of the mean time interval implied by the Bernoulli Process. It is also a sufficient statistic for hashrate since $D$ and $T$ contain all of the information on $N$ that is encoded on the blockchain.

\subsection{Mining equilibrium}
\label{subsec: Equilibrium}

The miner who is first to broadcast a puzzle solution receives a block reward of cryptocurrency. Miners incur cost in terms of the local official currency, which we denote USD.\footnote{Profit can be expressed in terms of cryptocurrency. In that case the unit hashrate cost cost $c$ is multiplied by $1/e$, which is the USD to BTC exchange rate. The equilibrium hashrate is the same  whatever currency is chosen to be the unit of account.} Mining equilibrium is a state where miners choose hashrate to maximize expected profit. When miners adjust hashrate at short block intervals profit maximization implies that the profit from a marginal unit of hashrate equals the competitive rate of profit. The independence of puzzle guesses means each unit of hashrate has a $1/N$ probability of receiving the block reward. Equation \ref{eq:Mining profit} is the equilibrium profit from mining one unit of hashrate (i.e. operating one mining computer for $T^{*}$ units of time).\footnote{Equation \ref{eq:Mining profit} is a dynamic equilibrium in the sense that it includes a growth rate that deviates from target. A rigorous expression would show $\Delta_{T}$ as a function of puzzle difficulty and hashrate.}
\begin{equation}
\label{eq:Mining profit}
\pi = \frac{1}{N}eP_{B_{n}}^{M}(1 + \Delta_{T}) - c = \phi
\end{equation}

$\Delta_{T} =  T^{*} - T$ is the difference between the target growth rate, $T^{*}$, and the expected actual growth rate $T$.\footnote{Blockchain growth rate is expressed as an expectation since it is stochastic.} $(1 + \Delta_{T})$ is the number of blocks per $T^{*}$ units of time; $c$ is the marginal cost of a unit of hashrate (i.e. one computer for $T^{*}$ units of time);\footnote{Marginal cost is the cost of electricity to run the ASIC's. Fixed costs, such as the rent on the building where the computers are stored, or the sunk cost of purchasing the computers, are not included.} $e$ is the USD/BTC exchange rate and $p_{B_{n}}^{M}$ is the expected miner's block reward on the $n^{th}$ block. ($p_{B_{n}}$ is the total block reward). I make no assumption about the structure of the mining market. $\phi \geq 0$ a competition parameter. Under free entry in the mining market $\phi =0$. Under less competitive conditions $\phi > 0$. Equation \ref{eq:Profit bound - static} is the mining market equilibrium. 

\begin{equation}
\label{eq:Profit bound - static}
N = eP_{B_{n}}^{M}(1 + \Delta_{T})/(c + \phi ) 
\end{equation} 

Equation \ref{eq:Profit bound - static} shows that hashrate $N$ is a function of the market conditions faced by miners. Puzzle difficulty does not appear in Equation \ref{eq:Profit bound - static}. Equation \ref{eq: Link between the Signal and the Instrument} substitutes $D/T$ for $N$ in Equation \ref{eq:Profit bound - static}.

\begin{equation}
\label{eq: Link between the Signal and the Instrument}
D/T = eP_{B_{n}}^{M}(1 + \Delta_{T})/ (c + \phi) - \epsilon\protect\footnotemark
\end{equation}
\footnotetext{$\epsilon$ is the stochastic noise from the relation $\hat{N} = N + \epsilon$.}

Equation \ref{eq: Link between the Signal and the Instrument} displays the monotone relationship between the miner's block reward, $p^{M}_{B}$, and the sufficient statistic for hashrate, $D/T$. 


\section{The \textit{Targeted} Block Reward Policy}
\label{sec: The Targeted Block Reward Policy}

This Section explains the implementation of the policy of adjusting the miner's hashrate. Section \ref{subsec: The block reward adjustment allocations} shows how the allocation of an addition or subtraction to the miner's block reward leaves unchanged the miner's incentive to append transactions onto blocks according to the ranking of the size of fee offered by the transactor. Section \ref{subsec: The mechanism} builds on the analysis of Section \ref{sec: The Model} to state the policy of adjusting the miner's block reward in response to the ratio of puzzle difficulty to blockchain growth rate. Section \ref{subsec: Policy to control hashrate} states the algorithm for controlling the miner's block reward.

\subsection{The block reward adjustment allocations}
\label{subsec: The block reward adjustment allocations}

One of the \textbf{T} constraints is that $Targeted$ should not change the order in which transactions are appended to blocks. Huberman et.al. \cite{HubermanLeshno} demonstrate that Nakamoto induces an equilibrium ordering of transaction fees and transactions on the blockchain that reflects transactors' relative time preferences. Transactions are appended to blocks in descending order of the size fee offered and the ranking of fees match the ranking of transactors' time-preference for executing their transaction.\footnote{See Theorem 1 and Proposition 2. Note that the result relies on parametric assumptions. The reader can consult Huberman et.al. \cite{HubermanLeshno} for details.} It is a constrained efficient outcome in the sense that if agents A and B send their respective transactions to the mempool at the same time and agent A incurs a higher cost for delay in completing its transaction than agent B, agent A's transaction will be appended to the blockchain ahead of agent B's transaction. \textit{Targeted} preserves the same ordering of transactions by adjusting the coinbase and each transaction fee by the same percentage. $Fee_{f,B_{n}}$ denotes the fee paid by the $f$th transaction appended to block $B_{n}$. Denoting $\xi = (p^{M}_{B_{n}} - p_{B_{n}})/p_{B_{n}}$, the ordering $Fee_{i, B_{n}} > Fee_{j,B_{n}}$ implies the ordering $\xi Fee_{i,B_{n}} > \xi Fee_{j,B_{n}}$, i.e. the ordering of transaction fees sent to the miner is unchanged. $R_{n}$ denotes the coinbase portion of the block reward. The miner's block reward is $p^{M}_{B_{n}} = (1 + \xi)\big[\sum_{f}Fee_{f,B_{n}} + R_{n}\big]$. When $\xi < 1$ the quantity $\xi p_{B_{n}}$ UTXO's is burned. When $\xi >1$ the quantity $\xi p_{B_{n}}$ UTXO's are minted (above the scheduled coinbase). 

Crucially, this method of proportional adjustment of the fee paid to the miner does not alter the Nash equilibrium of the pattern of fees offered by transactors. This is readily apparent by considering the incentive for $A$ or $B$ to alter their fee after a proportional adjustment $\xi$ of the miner's fee. Miners will still have an incentive to append transactions to blocks in ascending order of fees. If $B$ does not change its fee, $A$'s incentives are unaffected. It still must offer a higher fee than $B$ to secure a position in the queue of transactions ahead of $B$. Similarly, if $B$ was unwilling to offer a higher fee than $A$ before the adjustment, it not have an incentive to do so after the adjustment.

\subsubsection{Addition and Subtraction of UTXO Value}

When $p^{M}_{B_{n}}  - p_{B_{n}} > 0$, the difference is reflected in an additional payment to the miner above the total block reward. When $p^{M}_{B_{n}}  - p_{B_{n}} <  0$, the difference is reflected in a subtraction in payment to the miner from the total block reward. The source of increased UTXO value in the former case (minting or sending from another address) and the allocation of the unused UTXO value in the latter case, are dealt with by the \textit{Targeted} monetary policy in Section \ref{sec: Monetary Neutrality}.

\subsection{The puzzle difficulty signal and policy switch points}
\label{subsec: The mechanism}

Equation \ref{eq:Profit bound - static} shows that hashrate $N$ is over-determined. In addition to the miner block reward $P_{B}^{M}$, it is affected by mining cost $c$ and the exchange rate $e$. These variables move independently of each other. This means the block reward instrument will not be able to perfectly achieve the target. To compensate, $Target$ dynamically adjusts the block reward to move hashrate toward the target interval over time.

The mechanism works as follows. The endpoints of the hashrate target interval, $\{N_{UB}, N_{LB}\}$ map into puzzle difficulty signals $\{\frac{D}{T}(N_{UB}), \frac{D}{T}(N_{LB})\}$, which bound the puzzle difficulty target interval. An adjustment to the miner's block reward can be made after the end of each epoch when puzzle difficulty is adjusted. Epochs are numbered sequentially in temporal order, starting from 1. Epoch 1 includes blocks numbered 1 (Genesis) to $m$. Epoch 2 includes blocks numbered $m+1$ to $2m$, and so forth. The end-block of epoch $q$ is denoted $E_{q}$. The ratio $D/T$ is measured for epoch $q$ at block $E_{q}$ (the "adjustment block"). $D$ is the puzzle difficulty in effect for epoch $q$ and $N$ is the average hashrate in epoch $q$. It does not include the adjustment to $D$ made after $E_{q}$ is appended to the blockchain. The monotone relationship between the sufficient statistic and hashrate implies that hashrate is above target when $D/T > \frac{D}{T}(N_{UB})$ and below target when $D/T < \frac{D}{T}(N_{LB})$.

\textit{Targeted} imposes a ceiling on the miner's block reward in epoch $q+1$ when $D/T$, measured for epoch $q$, is above the target interval. After a ceiling has been put in place, it is lowered at each subsequent adjustment block until $D/T$ drops below $\frac{D}{T}(N_{UB})$. Similarly, \textit{Targeted} imposes a floor on the miner's block reward in epoch $q+1$ when $D/T$, measured for epoch $q$ when $D/T$ is below $\frac{D}{T}(N_{LB})$. When a floor has been put in place, it is raised at each subsequent adjustment block $E_{q}$ until the sufficient statistic rises above $\frac{D}{T}(N_{LB})$. When $D/T$ is inside the target interval, if a ceiling is in place from the prior adjustment block, the ceiling is raised, and so forth until it is removed altogether. If a floor is in place from the previous adjustment block, the floor is lowered at the current adjustment block, and so forth until it is removed altogether. The switching is displayed in Figure \ref{fig: Mining Puzzle Difficulty Signal Switch-Points}. At $E_{1}$ the sufficient statistic  crosses the $\frac{D}{T}(N^{UB})$ switch-point from below. This turns "on" the instrument that imposes a ceiling on the miner block reward, which is lowered at epoch $E_{2}$ and raised at $E_{3}$. $D/T$ is below $\frac{D}{T}(N_{LB})$ at $E_{4}$, which causes the ceiling to be removed and a floor to be imposed. The floor is raised at $E_{5}$.

\begin{figure}[h]
\centering
\includegraphics[page=1,width=0.75\textwidth,height = 0.3
\textheight]{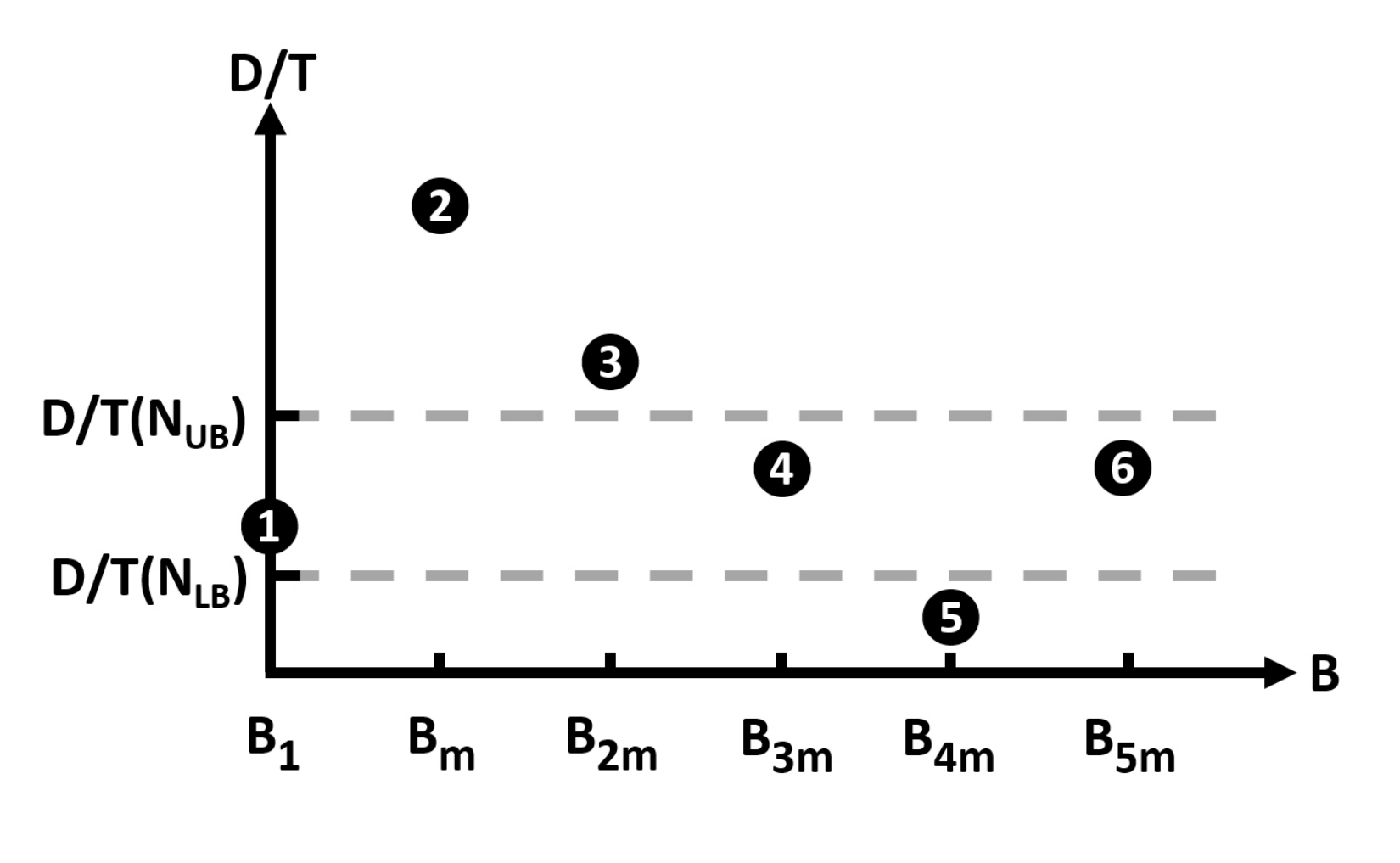}
\caption{Mining Puzzle Difficulty Signal Switch Points}\label{fig: Mining Puzzle Difficulty Signal Switch-Points}
\end{figure}
\vspace{-5pt}
The dynamic feature of the policy can be seen when another RHS variable of Equation \ref{eq: Link between the Signal and the Instrument} is increasing. For example, if the exchange rate is rising when  $D/T$ is above $\frac{D}{T}(N_{UB})$, the block reward ceiling will be pushed down at each adjustment point. It may transpire that hashrate continues to rise for some period of time, but so long as the exchange rate is bounded, the absolute value of the block reward ceiling adjustments will eventually rise above the exchange rate increase, which will push down hashrate.

\subsection{Policy to control hashrate}
\label{subsec: Policy to control hashrate}

The Hashrate control algorithm controls the imposition of ceilings and floors on the miner's block reward.\footnote{Appendix \ref{app:Code for The Hashrate Control Algorithm} contains Python code for the Hashrate Control Algorithm.}

\textbf{The Hashrate Control Algorithm}
\label{algo: The Hashrate Control Algorithm} 
A ceiling or floor on the miner's block reward is adjusted after the end-block $E_{q}$ of each epoch $q$  as a function of $D/T$. There are four cases to consider.

Set parameters $\tau, \gamma \in [0,1]$.

\underline{Case 1}: Hashrate is above the upper bound of the hashrate interval, $\frac{D}{T}(N_{UB})$. 

A ceiling on the miner's block reward is imposed. For each  block in epoch $q+1$, payment to miner is capped at $\text{ceil}_{q} = \tau \overline{P_{B}}(q), \; \tau < 1$, where $\overline{P_{B}}(q)$ is the median block reward in epoch $q$.

For each subsequent epoch until $\frac{D}{T}(N_{UB})$ is crossed from above, the ceiling is lowered by the formula $\text{ceil}_{q+1} \leftarrow \tau\text{ceil}_{q}$.

\underline{Case 2}: The puzzle adjustment at block $E_{q}$ implies that $D/T$ is below the lower bound of the target hashrate interval,  $\frac{D}{T}(N_{LB})$. 

A floor on the miner's block reward is imposed. For each  block in epoch $q+1$, payment to miner is capped at $\text{floor}_{q} = (1 + (1-\tau)) \overline{P_{B}}(q)$, where $\overline{P_{B}}(q)$ is the median block reward in Epoch $q$.

For each subsequent epoch until $\frac{D}{T}(N_{LB})$ is crossed from below, the floor is raised by the formula $\text{floor}_{q+1} \leftarrow (1 + (1-\tau))\text{ceil}_{q}$.

\underline{Case 3}: The puzzle adjustment at block $E_{q}$ implies that $D/T$ is inside the target hashrate interval. 

For each subsequent epoch until $\text{ceil}_{q} \leq \overline{P_{B}}(q)$, the ceiling is increased by the formula $\text{ceil}_{q+1} \leftarrow (1+(1 - \gamma)\text{ceil}_{q}$ Thereafter, the ceiling is removed.

For each subsequent epoch until $\text{floor}_{q} \geq \overline{P_{B}}(q)$, the floor is reduced by the formula $\text{floor}_{q+1} \leftarrow \gamma\text{floor}_{q}$ Thereafter, the floor is removed.\footnote{The formulas $\tau$ and $(1 +(1 - \tau))$ ensure symmetry in the adjustment of the ceiling and the floor, i.e. that the ceiling and floor adjust by the same percentage each epoch.}





\section{Strategic Effects of Block Adjustment Policy}
\label{sec:Strategic Effects of Block Adjustment Policy}

The adjustment of the miner's block reward introduces a new variable into equilibrium profit formula of Equation \ref{eq:Mining profit}. Treating the target hashrate as a point, $N^{T} = N_{UB} = N_{LB}$, the block reward can be expressed as $p^{M}_{B_{n}}(N^{T} - N)$. This shows that the block reward adjusts to deviations from the target hashrate. A question arises as to whether the block adjustment policy can induce strategic behavior that upsets the monotone response to block reward adjustments that was assumed in the previous section. Here we evaluate the stability of equilibrium at the targeted hashrate from two angles. First, we derive bounds on the block reward adjustment that ensure deviation from the equilibrium is unprofitable. Second, we show the adjustment mechanism causes hashrate to home in on the target under an assumption of constant miner profit. There may be other evaluations that could be carried out by placing different restrictions, but I believe the two chosen here show the reasonableness of the assumption that hashrate responds monotonically to changes in the block reward.\footnote{A maintained assumption of this paper is that block timestamps are correct. I do not explore whether \textit{Targeted} creates a new vulnerability to attack by miner manipulation of timestamps.}

\subsection{Stability of Miner Equilibrium}
\label{subsec:Stability of Miner Equilibrium}

Equation \ref{eq:Targeted miner profit} is the profit of a miner with hashrate $S$ at the target hashrate.

\begin{equation}
\label{eq:Targeted miner profit}
\pi(S) = \frac{S}{N^{T}}eP(N^{T})^{M}_{B}(1 - \Delta_{T})- cS
\end{equation}

The profit from deviating and adding or subtracting hashrate by $\Delta_{S}$ is

\[\pi(S + \Delta_{S}) = \frac{S + \Delta_{S}}{N^{T} + \Delta_{S}}eP(N^{T} + \Delta_{S})^{M}_{B}(1 - \Delta_{T})- c(S + \Delta_{S})\]

A deviation is unprofitable if $\pi(S + \Delta_{S}) < \pi(S),\; \forall \Delta_{S}$. Equation \ref{eq:Nash inequality} is a sufficient condition.

\begin{equation}
\label{eq:Nash inequality}
P(N^{T} + \Delta_{S}) < \frac{N^{T} + \Delta_{S}}{S + \Delta_{S}}\times\frac{S}{N^{T}}P(N^{T})^{M}_{B}
\end{equation}

Since $S \leq N^{T}$ by definition, it is immediate that, in order to maintain the inequality, an increase in hashrate, $\Delta_{S} > 0$ must cause the block reward to decrease and a decrease in hashrate does the opposite. This is consistent with the block adjustment policy. We can derive bounds on the magnitude of adjustment by evaluating the required adjustment at the extremes of $\Delta_{S} = 1$ and $\Delta_{S} =0$. This yields the range of adjustment multiple $(\frac{1 + \Delta_{S}/N^{T}}{1 + \Delta_{S}}, 1)$. Since the lower bond approaches 1 as $\Delta_{S}$ increases from 1 to $N^{T}$, the lower bound can be restricted to its value at $\Delta_{S} =1$, which yields $\frac{1 + 1/N^{T}}{2}$.\footnote{It is straightforward to add this constraint to the block adjustment formula replacing the point $N^{T}$ with the interval interval $N_{LB}, N_{UB}$.}

\subsection{Dynamic Adjustment of Hashrate}
\label{subsec: Dynamic Adjustment of Hashrate}

This section evaluates the dynamic response of hashrate to changes in the value of the miner's block reward which occur when hashrate is outside the target interval. The dynamic adjustment of hashrate to a change in the miner's block reward can be evaluated when hashrate adjusts to maintain a  profit $\phi$ per target time $T^{*}$.\footnote{The assumption of a constant profit  rules out the possibility of a dynamic strategy to influence hashrate by a miner coalition. Arguably, $\phi$  internalizes the exercise of market power.}

Consider an epoch $q$ where the total block reward is constant at $p^{M}_{B}$ and the puzzle difficulty adjustment at the end of epoch $q-1$ achieves the target growth rate when hashrate is $N$, and $N$ is the equilibrium hashrate when the miner's block reward is $p^{M}_{B}$. Suppose a ceiling or floor is imposed at the end of epoch $q-1$ resulting in a marginal change in the miner's block reward equal to $\Delta p$ (i.e. holding fixed the exchange rate, $e$, and hashrate cost $c$). Let $\Delta_{T}$ denote the distance between the actual blockchain growth rate and the target blockchain growth rate induced by a change in hashrate in epoch $q$, denoted $\Delta N$.

$\Delta_{T} =  T^{*} - T = \frac{D}{N} - \frac{D}{N + \Delta N}$

Equation \ref{eq: dynamic equilibrium profit} is the equilibrium of an interval of $T^{*}$ units of time in epoch $q$.

\begin{multline}
\label{eq: dynamic equilibrium profit}
\pi_{\Delta p,\Delta N} =\\
\underset{\text{lottery effect}}{\underbrace{\frac{1}{N + \Delta N}}}\Big (\underset{\text{growth rate effect}}{\underbrace{[1 + \underset{\Delta_{T}}{\underbrace{D\{\frac{1}{N} - \frac{1}{N + \Delta N}\}}}}}](e(P_{B}^{M} + \Delta p)\Big )\\
- c = \phi \footnote{Note that cost $c$ and profit $\phi$ are measured per unit of time and therefore scale with any deviation of blockchain growth rate from target.}
\end{multline}

The change in miner profit can be decomposed into a lottery effect and a growth rate effect. The lottery effect reflects that a change in hashrate changes the odds of any unit of hashrate mining the next block. The growth rate effect reflects that a change in hashrate alters the blockchain growth rate, which changes the number of puzzle lotteries in which a unit of hashrate can compete in $T^{*}$ units of time. Noting that $\frac{D}{N}|modulo\; T^{*} = 1$, the growth rate effect is $2 - \frac{D}{N + \Delta N}$. Equation \ref{eq: Marginal effect of hashrate on profit} is the marginal effect of hashrate on profit.

\begin{multline}
\label{eq: Marginal effect of hashrate on profit}
\frac{\Delta \pi}{\Delta N} = \big (e(P_{B}^{M} + \Delta p)\big ) \Big [ \frac{\Delta\text{lottery effect}}{\Delta N} + \frac{\Delta\text{growth rate effect}}{\Delta N}\Big]\\
= -2e(P_{B}^{M} + \Delta p)\frac{1}{(N + \Delta N)^{2}} < 0
\end{multline}

Equation \ref{eq: Marginal effect of miner's block reward on profit} is the marginal effect of the miner's block reward on profit.

\begin{equation}
\label{eq: Marginal effect of miner's block reward on profit}
\frac{\Delta \pi}{\Delta p} = \text{lottery effect}\times\text{growth rate effect}\times p^{M}_{B} > 0
\end{equation}
Using the implicit function theorem to combine equations \ref{eq: Marginal effect of hashrate on profit} and \ref{eq: Marginal effect of miner's block reward on profit}  shows that an increase in the block reward induces an increase in hashrate (and vice versa) when hashrate adjusts to maintain a constant profit rate per unit of time
$\Delta N/\Delta p = -\frac{\Delta \pi}{\Delta p}\big /\frac{\Delta \pi}{\Delta N} = +/+ > 0$.\footnote{The analysis of dynamic equilibrium can be extended to the case where the difficulty adjustment at the end of epoch $q$ does not achieve the target growth rate. In that case the ratio $D/N \neq T^{*}$ and 
\[\frac{\Delta \pi}{\Delta N} = -e(P_{B}^{M} + \Delta p)\frac{1}{(N + \Delta N)^{2}}\big (1 + D/N\big ) < 0\]
which yields the same result in terms of the direction of the response of hashrate to the value of the miner's block reward.} The key point is that, when hashrate is responsive to changes in the block reward, the marginal effect of the Hashrate Control Algorithm is to induce an adjustment of hashrate in the intended direction when a ceiling or floor pushes the miner's block reward below or above the total block reward.\footnote{Assuming that $e$ and $c$ are bounded, the result can easily be extended to show that the Hashrate Control Algorithm will eventually push hashrate toward the target interval, but it will not necessarily do so monotonically.} The foregoing proves the following proposition.\footnote{Bissias et.al. (2022) \cite{Bissias} document that miner spending allocation between Bitcoin and Bitcoin Cash adjust at high frequency to maintain a zero arbitrage condition. This validates that hashrate tends to adjust quickly. A richer formulation of my model would include the shadow profit of mining an alternative blockchain.} 

\begin{proposition*}
The marginal effect of the Hashrate Control Algorithm is to dynamically push hashrate toward the target interval.
\end{proposition*}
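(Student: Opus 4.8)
The plan is to prove the proposition in two stages: first establish the comparative-statics fact that hashrate responds in the same direction as the miner's block reward along the constant-profit locus, i.e. $\Delta N/\Delta p > 0$; then map the sign of the reward adjustments that the Hashrate Control Algorithm dictates onto the implied direction of hashrate movement relative to the target interval. The natural starting point is the dynamic equilibrium condition (Equation \ref{eq: dynamic equilibrium profit}), which expresses per-period profit $\pi_{\Delta p,\Delta N}$ as a function of the two perturbations the algorithm injects into an epoch: a change $\Delta p$ in the miner's block reward imposed at the end of epoch $q-1$ (holding $e$ and $c$ fixed) and the induced change $\Delta N$ in hashrate. I would treat the constant-profit requirement $\pi_{\Delta p,\Delta N}=\phi$ as an implicit relation $F(\Delta N,\Delta p)=0$ and recover the sign of $\Delta N/\Delta p$ from the implicit function theorem.

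First I would compute $\partial\pi/\partial p$ (Equation \ref{eq: Marginal effect of miner's block reward on profit}). Since the block reward enters multiplicatively, through the product of the positive lottery effect $1/(N+\Delta N)$ and the positive growth-rate effect $1+\Delta_T$, this derivative is manifestly positive, so this step is essentially a one-line observation. The substance lies in the hashrate channel, which splits into two opposing forces that I would isolate before signing them.

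Next I would compute $\partial\pi/\partial N$ (Equation \ref{eq: Marginal effect of hashrate on profit}) and argue it is negative; I expect this to be the main obstacle, precisely because hashrate enters profit through two channels that pull in opposite directions. The lottery effect $1/(N+\Delta N)$ is strictly decreasing in $\Delta N$, since added hashrate dilutes any single unit's share of the next block reward and thus lowers profit. The growth-rate effect $1+\Delta_T$ with $\Delta_T = D\{1/N - 1/(N+\Delta N)\}$ is instead increasing in $\Delta N$, since more hashrate shortens expected block time within the fixed-difficulty epoch, letting a unit of hashrate compete in more block lotteries per $T^{*}$ and thereby raising profit. To sign the net effect I would differentiate the product and invoke the equilibrium normalization $D/N = T^{*}$ (written in the text as $\frac{D}{N}\,\big|\,\mathrm{modulo}\ T^{*}=1$) to fix the relative magnitudes, so that the share-dilution term dominates and the combined derivative is negative. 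I would then check the footnoted off-target case $D/N\neq T^{*}$ to confirm the same sign holds, so that the conclusion does not depend on difficulty having exactly hit target at the close of epoch $q-1$.

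With $\partial\pi/\partial p>0$ and $\partial\pi/\partial N<0$ established, the implicit function theorem gives $\Delta N/\Delta p = -(\partial\pi/\partial p)\big/(\partial\pi/\partial N) > 0$: along the locus on which miners re-equilibrate to the competitive profit $\phi$, hashrate moves in the same direction as the block reward. The final step is to combine this monotone response with the sign of the adjustments specified by the algorithm. When the sufficient statistic $D/T$ lies above $\frac{D}{T}(N_{UB})$, Case 1 imposes or lowers a ceiling, so $\Delta p<0$ and hence $\Delta N<0$, pushing hashrate down toward the interval; when $D/T$ lies below $\frac{D}{T}(N_{LB})$, Case 2 imposes or raises a floor, so $\Delta p>0$ and hence $\Delta N>0$, pushing hashrate up toward the interval. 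Because $D/T$ is a monotone sufficient statistic for $N$ (Equation \ref{eq: Link between the Signal and the Instrument}), these two cases jointly show that the marginal effect of the algorithm is always to move hashrate toward the target interval, which is the claim. I would close with a remark that, under bounded $e$ and $c$, iterating the adjustment drives hashrate into the interval, though not necessarily monotonically, to round out the dynamic statement.
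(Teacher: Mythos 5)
Your proposal retraces the paper's own argument step for step: the same dynamic-equilibrium profit equation, the same lottery-effect/growth-rate-effect decomposition, the signs $\frac{\Delta\pi}{\Delta p}>0$ and $\frac{\Delta\pi}{\Delta N}<0$ combined via the implicit function theorem to give $\Delta N/\Delta p>0$, the same mapping of the ceiling and floor cases onto downward and upward movements of hashrate, and even the two footnoted extensions (the off-target difficulty case and the bounded-$e,c$ convergence remark). The only place you depart from the paper is in how $\frac{\Delta\pi}{\Delta N}$ is obtained: you propose to apply the product rule to $\frac{1}{N+\Delta N}\bigl(2-\frac{D}{N+\Delta N}\bigr)$, whereas the paper sums the two effects' marginal contributions to arrive at $-2e(P_{B}^{M}+\Delta p)\frac{1}{(N+\Delta N)^{2}}$. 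Be aware that the product rule actually gives $\frac{2(D-N-\Delta N)}{(N+\Delta N)^{3}}$, which under the normalization $D/N=T^{*}$ vanishes at $\Delta N=0$ and changes sign with $\Delta N$; so your claim that ``the share-dilution term dominates'' does not follow automatically from the normalization, and you would need either to adopt the paper's computation or to evaluate the derivative away from $\Delta N=0$ and treat the $\Delta N<0$ branch separately before invoking the implicit function theorem.
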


\section{Monetary Neutrality}
\label{sec: Monetary Neutrality}

This section describes a protocol that ensures the implementation of the Hashrate Control Algorithm does not cause any substantive change to the Nakamoto monetary policy in terms of the spending potential of addresses. \textit{Targeted} induces offsetting additions or subtractions from the UTXO value of an address that occurs when the address sends its UTXO value in a transaction. The concept of money supply is adapted to include the  UTXO adjustments that are committed for future transactions.  

\begin{definition*}{Spending Potential}

(i) \textbf{Address Spending Potential} The UTXO value in address $x$  plus the UTXO value that is scheduled to be added to or subtracted from  $x$ when it sends a transaction. 

(ii) \textbf{Aggregate Spending Potential} The sum of the spending potential of addresses plus the increase or decrease in spending potential that is not currently assigned to addresses, but which under the protocol can be assigned in the future.
\end{definition*}

The concept of monetary neutrality has two dimensions.

\begin{definition**}{Monetary Neutrality}

(i) \textbf{Aggregate Monetary Neutrality} Aggregate UTXO spending  potential under Targeted is the same as Nakamoto at every block. 

(ii) \textbf{Distributional Monetary Neutrality} At any block, an addition or subtraction of UTXO from the total block reward is allocated to an address in proportion to its percentage of aggregate spending potential. 
\end{definition**}

\subsection{The \textit{Targeted} Monetary Policy}
\label{subsec: The Targeted Monetary Policy}

This section presents a high-level description of the monetary policy as it affects a single address $x$. Python code describing the logic of the policy is in Appendix \ref{app: Code for the Targeted Monetary Policy}.

\textbf{Subtractions from the total block reward}: When a UTXO value of $q$ is subtracted from the total block reward at block $B_{n}$ the miner is instructed to send $q$ to an address denoted the "UTXO pool". The UTXO pool is a single address from which UTXO value is sent to other addresses and has an initial value $P$ to which $q$ is added. Every address $x$ holding UTXO value is assigned spending potential in the UTXO pool (the "assigned value") which is denoted $e_{x}$. When $q$ is sent to the UTXO pool, $x$'s assigned value is increased by $x$'s fraction of aggregate UTXO spending potential implied by the state after the block $B_{n}$ transactions, denoted $S_{x|B_{n-1}}$ ($x$'s  "percentage") times $q$, rounded down to the nearest Satoshi.

\textbf{Additions to the total block reward}: When  $q$ is added to  the total block reward at block $B_{n}$ there are two possible adjustments to accounts. If $P - q > 0$, $x$'s assigned value in the UTXO pool is reduced by the product of its percentage times $q$, rounded down to the nearest Satoshi. When $P - q \leq 0$ two accounting changes occur; (i)  P and $x$'s assigned value $e_{x}$ are each reduced to zero  and (ii) the value $q - P$ is added to an account that tracks value that is subtracted from the spending potential of addresses (the "Remainder Accounts") which has initial value $R$. Each address $x$ is assigned a value in the remainder account, denoted $r_{x}$, which is increased by  its percentage times $P -q$, rounded down to the nearest Satoshi.\footnote{The UTXO value subtracted from P can either be burned or sent to the miner. The latter would reduce the minting of UTXO required to increase the miner's block reward.}

Note that the spending potential of an address $x$ with UTXO value $x$ is $x + e_{x} - r_{x}$.\vskip5pt

\textbf{Rounding Adjustment} At each block $B_{n}$ the unassigned UTXO value in the UTXO pool and the Remainder Accounts are assigned to the individual address accounts, $e_{x}$ and $r_{x}$ in accordance to the percentage of address $x$, rounded down to the nearest Satoshi.

\textbf{Transactions} Prior to the adjustments to the UTXO pool and Remainder pool, the transactions in block $B_{n}$ are executed. An adjustment is made when address $x$ sends UTXO value in a transaction that is appended to block $B_{n}$. If $e_{x} - r_{x} >0$, the difference is sent to $x$. If $r_{x} - e_{x} > 0$, the difference is burned from $x$'s UTXO value.\footnote{The burning requirement invalidates the transaction when the burn value exceeds the unspent UTXO.} The initial values of $e_{x}$ and $r_{x}$ are the ending values for block $B_{n-1}$. The output values of $S_{x|B_{n,1}}, e_{x}$ and $r_{x}$ are the initial state variables in the UTXO pool and the Remainder Accounts adjustments. 

Figure \ref{fig: Targeted Monetary Policy} depicts the operation of the \textit{Targeted} monetary policy at block $B_{n}$. The timing sequence is denoted by the circled numbers. The first step is the execution of the transactions. The adjustment to a transaction sender's UTXO value is determined by the difference between its account value in the UTXO pool and its account value in the Remainder Accounts. The second step is the payment of the miner block reward. If $P^{M}_{B_{n}} < p^{M}_{B_{n}}$, UTXO value of $q$ is minted to cover the payment. The third step is to send of UTXO value $q$ to the UTXO Pool if $P^{M}_{B_{n}} < p^{M}_{B_{n}}$ or else burn UTXO value up to $q$ in the UTXO pool and increase the value of the Remainder Accounts by the residual. Adjustments to the address account values in the UTXO Pool and the Remainder Accounts are made.\footnote{ Steps 2 and 3 can be simultaneous. When $q < 0$ the minting of UTXO value is reduced by the amount of value in the UTXO pool, which is sent to the miner from the UTXO pool. This is implicit in the code in Appendix \ref{app: Code for the Targeted Monetary Policy}.}

\begin{figure}[H]
\centering
\includegraphics[page=1,width=0.95\textwidth,height = 0.45
\textheight]{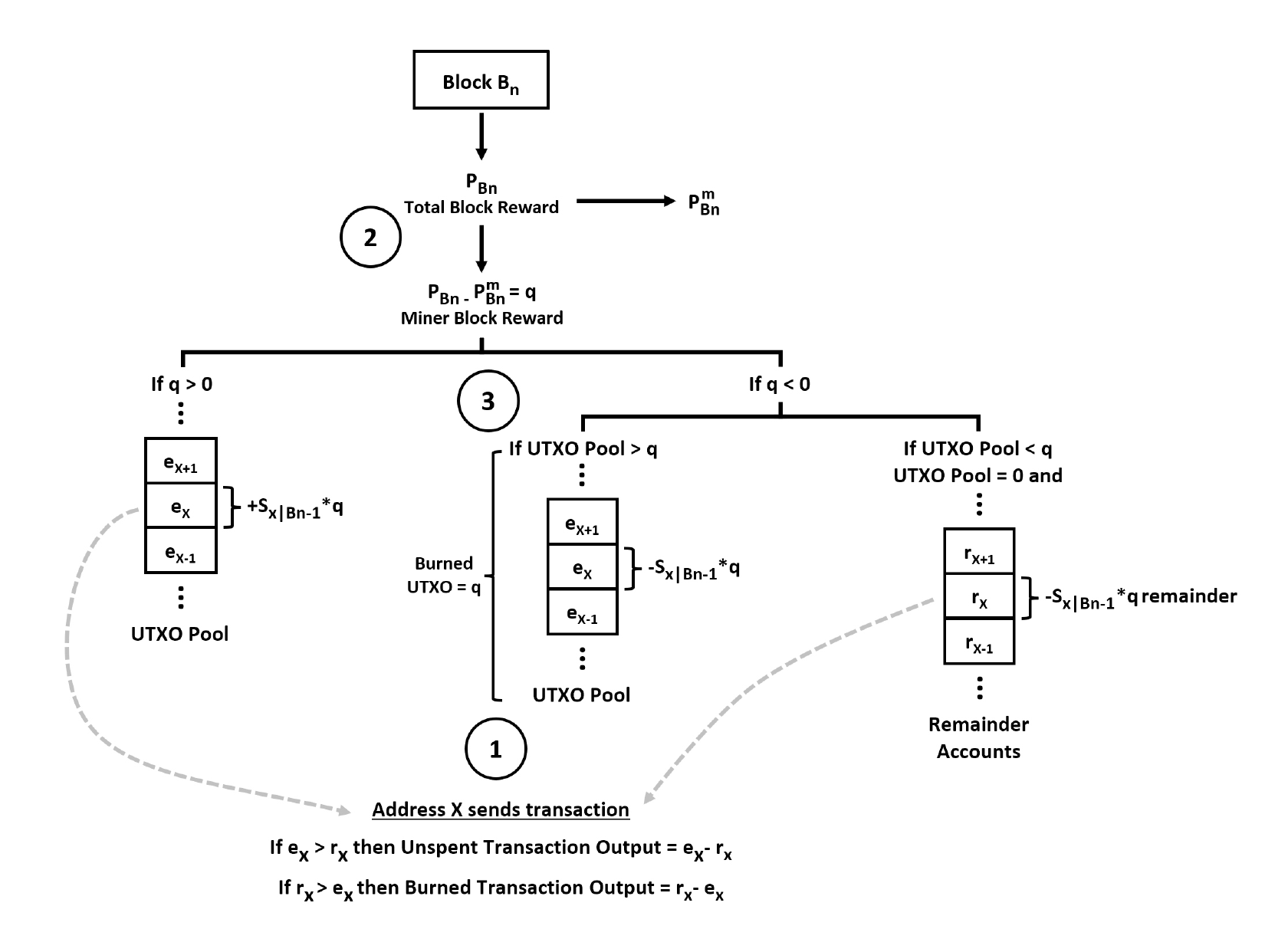}
\caption{Targeted Monetary Policy}
\label{fig: Targeted Monetary Policy}
\end{figure}

\begin{proposition**}
Targeted Achieves Monetary Neutrality
\end{proposition**}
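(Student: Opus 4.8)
The plan is to establish the two halves of Definition~2 (Aggregate and Distributional Monetary Neutrality) separately, treating the post-block state of the ledger as the triple $(U_n, P_n, R^{\ast}_n)$ --- total directly-held UTXO value, the UTXO-pool balance $P$, and the aggregate Remainder balance $R^{\ast}$ (written with a star to distinguish it from the coinbase $R_n$) --- and arguing by induction on the block index $n$. For Aggregate Monetary Neutrality the quantity I would track is $M_n := U_n + P_n - R^{\ast}_n$, which by Definition~1(ii) equals aggregate spending potential whether or not the rounding sweep has yet redistributed residuals to individual addresses; this means the Satoshi rounding is irrelevant to the aggregate count and can be ignored in this half of the argument. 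The inductive claim is that $M_n$ increases by exactly the coinbase $R_n$ at each block, which is precisely the Nakamoto increment, since transactions and fee payments are internal transfers that net to zero under both protocols.

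For the inductive step of Aggregate Neutrality I would compute $\Delta M = \Delta U + \Delta P - \Delta R^{\ast}$ by case analysis on the sign of $q := p^{M}_{B_{n}} - p_{B_{n}}$. When $q<0$ the miner forwards $|q|$ to the pool: directly-held value changes by $R_n - |q|$ and $P$ rises by $|q|$, so $\Delta M = R_n$. When $q>0$ with $P-q>0$, an extra $q$ is minted into the miner's payment while the pool balance is drawn down by $q$, giving $\Delta M = (R_n+q) - q = R_n$. When $q>0$ with $P-q\le 0$, the pool is zeroed and the residual $q-P$ is booked into the Remainder, and summing the three ledgers once more yields $\Delta M = (R_n+q) - P - (q-P) = R_n$. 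In every branch the increment equals the coinbase, matching Nakamoto, which is exactly Aggregate Monetary Neutrality.

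Distributional Monetary Neutrality is close to definitional: each assignment increment to $e_x$ or to $r_x$ is, by the monetary-policy rules, the address percentage $S_{x|B_{n-1}}$ times $|q|$ (or times the residual $q-P$), so the block-reward adjustment is apportioned in proportion to each address's share of aggregate spending potential. The only genuine gap is that these per-address increments are rounded down to the nearest Satoshi, so their naive sum undershoots the pool and Remainder balances by an accumulated residual. Here I would invoke the stated Rounding Adjustment step, which sweeps the unassigned pool and Remainder value back into the $e_x$ and $r_x$ in proportion to each address's percentage; this both restores exactness, $\sum_x e_x = P_n$ and $\sum_x r_x = R^{\ast}_n$, and re-establishes the proportional allocation claimed in Definition~2(ii).

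The step I expect to be the main obstacle is propagating the invariants across the \textbf{Transactions} operation rather than the single-block reward count. When address $x$ spends, the protocol settles $e_x - r_x$ against $x$'s UTXO --- sending the surplus to $x$ when $e_x>r_x$ and burning the deficit when $r_x>e_x$ --- and then resets those claims; I would need to verify that each such settlement removes equal amounts from the physical pool/Remainder balances and from the direct-UTXO total, so that $M_n$ is unchanged by transactions and the assignment identities survive into block $n+1$. Making the intra-block ordering explicit (transactions first, then the miner payment, then the pool/Remainder updates and the rounding sweep, as in Figure~\ref{fig: Targeted Monetary Policy}) and checking that each operation preserves both $M_n = \text{(Nakamoto aggregate)}$ and $\sum_x e_x = P_n$, $\sum_x r_x = R^{\ast}_n$ is the crux; once the invariants are shown stable, both neutrality statements follow from the per-block computation above.
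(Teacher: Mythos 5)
Your proposal is correct and follows essentially the same route as the paper's proof, which simply asserts the aggregate identity $q = \Delta P + \Delta R$ (your case analysis on the sign of $q$ spells this out) and then notes that the pool/Remainder changes are allocated proportionally with the rounding residual swept back each block (your distributional half). If anything, yours is more complete: the invariant $\sum_x e_x = P_n$, $\sum_x r_x = R^{\ast}_n$ surviving the \textbf{Transactions} settlement step, which you rightly flag as the crux, is not addressed at all in the paper's two-sentence proof.
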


\begin{proof}
The difference between the total block reward and the miner's block reward, $q$, is offset by changes of opposite sign in the value of the UTXO pool and the Remainder Accounts; $q = \Delta P + \Delta R$. This proves aggregate monetary neutrality. The value change in the UTXO Pool and the Remainder Accounts are allocated to addresses in proportion to their spending potential, rounded down to the nearest Satoshi and the unassigned residual is assigned to addresses in accordance with the same formula at each block. This proves distributional neutrality.  
\end{proof}

\section{Conclusion}

\textit{Targeted} is motivated by two observations about PoW blockchains. One observation is the two external hashrate costs of network security and $CO^{2}$ emissions, the former decreasing and the latter increasing in hashrate. This observation suggests there is an interval of hashrate in which total costs are minimized. The second observation is that hashrate is a function of the value of the block reward. Nakamoto does not provide an incentive for miners to set hashrate at the optimal level. \textit{Targeted} incentivizes miners to  home in on a target hashrate interval. This is achieved by imposing a floor to the miner's block reward when hashrate is below target and imposing a ceiling on the  miner's block reward when hashrate is above target. The ceiling is progressively lowered and the floor is progressively raised for so long as hashrate is outside the target interval. \textit{Targeted} achieves aggregate and distributional monetary neutrality. Aggregate monetary neutrality means that the UTXO spending potential of the network is the same under \textit{Targeted} and Nakamoto at each block. Distributional monetary neutrality means that the UTXO spending potential of addresses are adjusted to offset increases and decreases in the block reward in proportion to their UTXO value. 

Identifying the optimal hashrate interval entails some difficulty. Three of the key inputs; estimating hashrate, ASIC efficiency and electricity cost, are, in principle, solvable. But quantifying the other two inputs; the costs of $CO^{2}$ emissions and network security, are elusive. To implement \textit{Targeted} at the present time will require the formation of a consensus among network participants concerning those quantifications.  



\bibliographystyle{plain}
\bibliography{mybibliography}

\pagebreak 

\appendix

\section{Code for The Hashrate Control Algorithm}
\label{app:Code for The Hashrate Control Algorithm}

This appendix states the logic steps of the Hashrate Control Algorithm. It is not the code that would be written to implement the \textit{Targeted} on servers in the Bitcoin network.
{\fontsize{9}{9}\selectfont
\begin{lstlisting}[language=Python, caption=Python code for Hashrate Control Algorithm]


class Epoch:
    def __init__(self):
        self.rewards = []  # To store the rewards of each block
        within an epoch
        self.hashrates = [] # To store the hashrates of each block
        within an epoch
        self.ceil = None
        self.floor = None
        self.difficulty = None
        self.elapsed_time = None # T_n

    def median_block_reward(self):
        sorted_rewards = sorted(self.rewards)
        return np.median(sorted_rewards) 

    def average_hashrate(self):
        sorted_hashrates = sorted(self.hashrates) 
        return np.mean(sorted_hashrates)

    def add_reward(self, reward):
        self.rewards.append(reward)

    def add_hashrate(self, n):
        self.hashrates.append(n)

class Blockchain:
    def __init__(self, tau, gamma, upper_bound, lower_bound, initial_difficulty):
        self.tau = tau
        self.gamma = gamma
        self.epochs = [Epoch()]
        self.DT = None  # Puzzle difficulty/Blockchain growth rate;
        get this value from blockchain source code
        self.DT_N_UB = upper_bound  # Upper boundary
        self.DT_N_LB = lower_bound  # Lower boundary

    def adjust_reward(self, reward, epoch_idx):
        epoch = self.epochs[epoch_idx]
        if epoch.ceil is not None:
            reward = min(reward, epoch.ceil)
        if epoch.floor is not None:
            reward = max(reward, epoch.floor)
        epoch.add_reward(reward)

    def adjust_hashrate(self, hashrate, epoch_idx, e_current, P_current, efficiency_current, electricity_cost_current):
        alpha1 = None
        alpha2 = None
        alpha3 = None 
        alpha4 = None 
        # Found through regression on past values

        epoch = self.epochs[epoch_idx]
        log_eP_current = np.log(1+ (e_current * P_current)) 
        log_eff_current = np.log(1+efficiency_current)
        log_electricity_cost_current = np.log(1+electricity_cost_current)
        # Put it all together:
        # bN_{t+1} = alpha1 + alpha2*log(eP_t) + + alpha3*log(efficiency_t) + alpha4*log(c_t)
        predicted_log_hashrate = (
            alpha1 
            + alpha2 * log_eP_current
            + alpha3 * log_eff_current
            + alpha4 * log_electricity_cost_current
        )

        # Exponentiate to get N_{t+1}
        new_hashrate = np.exp(predicted_log_hashrate)      
        epoch.add_hashrate(new_hashrate)
        return new_hashrate
        
    def end_of_epoch(self):
        last_epoch = self.epochs[-1]
        P_B_median = last_epoch.median_block_reward()
        N_n = last_epoch.average_hashrate()
        new_epoch = Epoch()

        BLOCKS_PER_EPOCH = 14
        T_STAR = BLOCKS_PER_EPOCH * 10 * 60 # 1,209,600 for 2016 blocks
        D_n = last_epoch.difficulty
    
        T_n = (BLOCKS_PER_EPOCH * D_n * 2**32) / N_n
        last_epoch.elapsed_time = T_n

        T_hat = max(T_STAR /4, min(T_n, 4 *T_STAR))
        D_next = D_n * (T_STAR / T_hat)
        new_epoch.difficulty = D_next

        # Case 1: Hashrate too high - need to decrease rewards
        if self.DT > self.DT_N_UB:
            #("Case 1: Hashrate above upper bound")
            new_epoch.ceil = (self.tau) * P_B_median
            new_epoch.floor = None  # Clear any existing floor
    
        # Case 2: Hashrate within bounds - gradual adjustment
        elif self.DT_N_LB < self.DT < self.DT_N_UB:
            
            # Previous epoch had a ceiling
            if last_epoch.ceil is not None:
                if last_epoch.ceil >= self.DT:
                    # Gradually relax the ceiling
                    new_epoch.ceil =  (1+self.gamma) * last_epoch.ceil
                else:
                    # Remove ceiling if it's no longer needed
                    new_epoch.ceil = None
            
            # Previous epoch had a floor
            elif last_epoch.floor is not None:
                if last_epoch.floor <= self.DT:
                    # Gradually relax the floor
                    new_epoch.ceil = (self.gamma) * last_epoch.floor
                else:
                    # Remove floor if it's no longer needed
                    new_epoch.floor = None
            else:
                new_epoch.floor = last_epoch.floor
                new_epoch.ceil = last_epoch.ceil

        # Case 3: Hashrate too low - need to increase rewards
        elif self.DT < self.DT_N_LB:
            #("Case 3: Hashrate below lower bound")
            new_epoch.floor =  (1+(1-self.tau)) * P_B_median
            new_epoch.ceil = None  # Clear any existing ceiling

        self.epochs.append(new_epoch)
\end{lstlisting}
}

\section{Code for the \textit{Targeted} Monetary Policy}
\label{app: Code for the Targeted Monetary Policy}

This appendix states Python code for the \textit{Targeted} monetary policy. The interpretation of the variables are commented. The code states the logic for the re-allocation of spending potential of addresses at each block that is induced by transactions and the addition or subtraction of the block reward, $p_{B_{n}}^{m} - p_{B_{n}}$, in accordance with the Hashrate Control Algorithm. It is not the code that would be written to implement the \textit{Targeted} on servers in the Bitcoin network.

{\fontsize{9}{9}\selectfont
\begin{lstlisting}[language=Python, caption= Python code for \textit{Targeted} monetary policy]

def update_utxo_values(p_B_n_m, p_B_n, S_x_B_n_minus_1, P, FP, e_values, r_values, R, FR):
    q = p_B_n_m - p_B_n
    
    for x in range(len(e_values)):
        if q < 0:
            P += q
            e_values[x] = math.floor(e_values[x] + S_x_B_n_minus_1[x] * (q + FP))
        elif q > 0:
            if P > q:
                P -= q
                e_values[x] = math.floor(e_values[x] - S_x_B_n_minus_1[x] * (q - FP))
            else:
                R += q - P
                r_values[x] = math.floor(r_values[x] + S_x_B_n_minus_1[x] * (P + FR))
                P = 0
                e_values[x] = 0

    x_values = [x_initial[i] + e_values[i] - r_values[i] for i in range(len(e_values))]
    return P, e_values, r_values, R, FP, FR, x_values

def process_transactions(x_initial, e_values, r_values, transaction_indices):

# initial spending potential of addresses
x_values = [x_initial[i] + e_values[i] - r_values[i] for i in range(len(e_values))]

    #transaction sender update step
    for i in transaction_indices:
        x_value = math.floor(x_initial[i] + e_values[i] - r_values[i])
        e_values[i] = 0
        r_values[i] = 0

     #miner update step
     x_value[miner_index] += p_B_n_m

    total_sum = sum(x_values[i] + e_values[i] - r_values[i] for i in range(len(e_values)))

     # Spending potential shares of addresses after transactions and miner fee, before allocations to UTXO Pool and Remainder Accounts
     S_x_B_n_minus_1 = [(x_values[i] + e_values[i] - r_values[i]) / total_sum for i in range(len(e_values))]

    return e_values, r_values, S_x_B_n_minus_1

def update_fp_fr(FP, FR, e_values, r_values):
    FP += sum(e_values)
    FR += sum(r_values)
    return FP, FR

def compute_S_x_B_n(x_values, e_values, r_values):
    total_sum = sum(x + e - r for x, e, r in zip(x_values, e_values, r_values))

     # Spending potential shares of addresses after transactions and miner fee and assingments to UTXO Pool and Remainder Accounts. 
     S_x_B_n = [(x + e - r) / total_sum for x, e, r in zip(x_values, e_values, r_values)]
    return S_x_B_n

Interpretation of variables:

x_initial = [<values to be inserted>]  # initial values for x 
e_initial = [<values to be inserted>]  # initial UTXO values of addresses
r_initial = [<values to be inserted>]  # initial values of r_x

p_B_n_m = <value to be inserted> # miner's block reward
p_B_n = <value to be inserted> # total block reward
P = <value to be inserted> # aggregate value of UTXO Pool 
e_values = e_initial[:]  # copy of initial UTXO values in UTXO Pool
r_values = r_initial[:]  # copy of initial values of r_x, the value for each address in the Remainder Accounts
R = <value to be inserted> #  aggregate value in the Remainder Accounts
FP = <value to be inserted> #  sum of values e_x in the UTXO Pool
FR = <values to be inserted> # sum of values r_x in the Remainder Accounts
transaction_indices = [:] # indices of addresses that send transactions in block B_n
miner_index = x # address of the miner of block B_N

Computation:

# Process transactions for each address in transaction_indices
e_values, r_values, S_x_B_n_minus_1 = process_transactions(x_initial, e_values, r_values, transaction_indices, p_B_n_m, miner_index)

# Update UTXO values for each address
P, e_values, r_values, R, FP, FR, x_values = update_utxo_values(p_B_n_m, p_B_n, S_x_B_n_minus_1, P, FP, e_values, r_values, R, FR)

# Update FP and FR with the residuals
FP, FR = update_fp_fr(FP, FR, e_values, r_values)

# Compute S_x_B_n for each address
S_x_B_n = compute_S_x_B_n(x_values, e_values, r_values)

# Compute the sum of initial and final x values
initial_x_sum = sum(x_initial)
final_x_sum = sum(x_values)
x_difference = initial_x_sum - final_x_sum

print("Updated values:")
print(f"P: {P}, e_values: {e_values}, r_values: {r_values}, x_values: {x_values}, R: {R}, FP: {FP}, FR: {FR}")
print(f"S_x_B_n: {S_x_B_n}")
print(f"Difference between the sum of initial and final x values: {x_difference}")

\end{lstlisting}
}

\end{document}